\theoremstyle{plain}
\newtheorem{thm}{Theorem}[section]
\newtheorem{lem}[thm]{Lemma}
\newtheorem{prop}[thm]{Proposition}
\theoremstyle{definition}
\newtheorem{df}[thm]{Definition}
\newtheorem{ex}[thm]{Example}
\newtheorem{ex-notn}[thm]{Example/Notation}
\newtheorem{rem}[thm]{Remark}
\def\ZZ{{\mathbb Z}}
\def\ra{\rightarrow}
\def\Mtwo{{\em Macaulay} 2\expandafter}
\numberwithin{equation}{section}
\begin{document}

\title{On the minimality of Hamming compatible metrics}
\author{Parsa Bakhtary, Othman Echi}

\address{Department of Mathematics and Statistics\\
King Fahd University of Petroleum and Minerals\\
Dhahran, Saudi Arabia 31260}
\begin{abstract}A Hamming compatible metric is an integer-valued metric on the words of a finite alphabet which agrees with
the usual Hamming distance for words of equal length.  We define a new Hamming compatible metric, compute the
cardinality of a sphere with respect to this metric, and show this metric is minimal in the class of all ``well-behaved"
Hamming compatible metrics.
\end{abstract}
\thanks{}
\subjclass{}
\keywords{Hamming distance, discrete, metric, minimal}
\email{pbakhtary@kfupm.edu.sa, echi@kfupm.edu.sa}

\maketitle
\section{Introduction}
Ever since Richard Hamming's seminal 1950 paper \cite{H}, the notion of the Hamming distance has played a fundamental role in the development of coding theory, error-correcting codes, cryptography, telecommunication, and information theory.  Simply put, the Hamming distance between two words (or strings) of equal length counts the number of places where the corresponding letters differ.  For instance `coma' and `comb' have Hamming distance $1$, while `sunny' and `burnt' have Hamming distance $3$, and it is easy to check that this is in fact a metric on the set of words of a given length.

However, the classical Hamming distance is restrictive in that it only measures the distance between two words of equal length.  One would also like to have a similar metric for words of different lengths which agrees with the Hamming distance when those words are of the same length.  To this end, the second author introduced \cite{E} the notion of an integer-valued  {\emph{Hamming compatible}} metric and gave a natural example.

A natural question that arises is ``How small can such a metric be?"  Using only the axioms of a metric and the fact that it is integer-valued and Hamming compatible, one may try to find the smallest such metric on a language.  Unfortunately, these properties alone are not enough to say something substantive about the minimality of such metrics, and we will see there are some examples which do not have the desired behavior.  On the other hand, if we make some natural uniformity assumptions on the metric, we can show that there is indeed a smallest such metric.

In what follows, we fix a finite set $\Sigma$, called an alphabet,
and denote by $\Sigma_n$ the set of $n$-letter words. The collection
of all words of finite length is denoted $\Sigma^*$, called the
improper language.  All metrics are assumed to be integer-valued.

\section{The $d_2$ metric}
We begin with the following characterization of Hamming distance, $H$.

\begin{prop}Let $d: \Sigma^* \times \Sigma^* \ra \ZZ_{\geq 0}$ be a
mapping such that the induced map Suppose $d_n$ on $\Sigma_n \times
\Sigma_n $ is a metric for every $n \in \ZZ_{\geq 0}$.  Then $d=H$
is the Hamming distance on $\Sigma_n$ for every $n \in \ZZ_{\geq 0}$
if and only if
\begin{enumerate}
\item for all words $u,v$ with $n=l(u)=l(v)$ we have $d(u,v) \leq n$, and
\item $d(u_1 v_1,u_2 v_2) = d(u_1,u_2) + d(v_1,v_2)$ whenever $l(u_1)=l(u_2)$ and $l(v_1)=l(v_2)$.
\end{enumerate}
\end{prop}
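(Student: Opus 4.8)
The plan is to prove both directions of the equivalence, the forward implication being routine and the substance lying in the converse.

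For the forward direction, assuming $d=H$, I would verify (1) and (2) directly from the definition of Hamming distance. Property (1) is immediate since $H(u,v)$ counts the positions at which $u$ and $v$ differ, and there are only $n$ such positions when $l(u)=l(v)=n$. Property (2) holds because concatenating $u_1v_1$ with $u_2v_2$ (with matching block lengths) merely juxtaposes the position-by-position comparisons: the differing positions split into those contributed by the $u$-blocks and those contributed by the $v$-blocks, giving $H(u_1v_1,u_2v_2)=H(u_1,u_2)+H(v_1,v_2)$.

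For the converse, I would first pin down $d$ on single letters. For $a,b\in\Sigma$, if $a=b$ then $d(a,b)=0$ by the identity-of-indiscernibles axiom for the metric $d_1$; if $a\neq b$, the positivity axiom gives $d(a,b)>0$, and since $d$ is integer-valued we get $d(a,b)\geq 1$, while property (1) with $n=1$ forces $d(a,b)\leq 1$. Hence $d(a,b)=1=H(a,b)$ whenever $a\neq b$, so $d$ agrees with $H$ on $\Sigma_1$.

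The key step is then to bootstrap from single letters to arbitrary words using property (2). Writing a length-$n$ word as $u=u_1u_2\cdots u_n$ with each $u_i\in\Sigma_1$, I would induct on $n$: splitting off the terminal letter and applying (2) with block lengths $n-1$ and $1$ yields
\[ d(u,v)=d(u_1\cdots u_{n-1},\,v_1\cdots v_{n-1})+d(u_n,v_n)=\sum_{i=1}^{n}d(u_i,v_i), \]
where the second equality is the inductive hypothesis. Since each summand equals $H(u_i,v_i)$ by the single-letter computation, the total is exactly the number of positions at which $u$ and $v$ differ, i.e.\ $d(u,v)=H(u,v)$. The main point requiring care is that the block-length hypotheses in (2) are honored throughout the decomposition, which they are because at each stage we compare a length-$(n-1)$ prefix against a length-$(n-1)$ prefix and a single terminal letter against a single terminal letter; no other obstacle arises.
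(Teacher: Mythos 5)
Your proof is correct and follows essentially the same route as the paper: pin down $d$ on single letters using the metric axioms (integer-valuedness plus positivity) together with property (1), then induct on word length, using property (2) to split off one letter at a time (you peel off the terminal letter where the paper peels off the initial one, a purely cosmetic difference). Your forward direction is slightly more explicit than the paper's, which simply declares it obvious, but the substance is identical.
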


\begin{proof}
It is obvious by the definition that the Hamming distance satisfies properties (1) and (2).  To show the converse, we induct
on $n=l(u)=l(v)$.  For $n=1$, $u=a$ and $v=b$ are phonemes.  Property (1) implies that $d(a,b) \leq 1$ and can only be $0$ if $a=b$ since $d$ is a metric.  Hence, if $a \neq b$, we must have $d(a,b)=1$.  So $d$ is the Hamming distance for phonemes, i.e. for words of length $1$.

For the inductive step let $u', v'$ be words of length $n+1$ and write $u'=au, v'=bv$ where $a,b$ are phonemes and $u,v$ are words of length $n$.  Using property (2), the $n=1$ case, and the inductive hypothesis we have
\[ d(u',v') = d(a,b) + d(u,v) = H(a,b) + H(u,v) = H(u',v'). \]
Hence, $d$ is the Hamming distance.
\end{proof}

\begin{df}
A metric $d$ on the language $\Sigma^*$ is called \emph{Hamming
compatible} if for any $n \in \ZZ_{\geq 0}$, $d(u,v)=H(u,v)$ for all
$u,v \in \Sigma_n$.
\end{df}

Let $H(\underline{u},\underline{v})$ be the truncated Hamming
function, defined as follows.  If $l(u) \geq l(v)$, then
$\underline{u}$ drops the last $l(u)-l(v)$ letters of $u$ and
$\underline{v}=v$, so that
$H(\underline{u},\underline{v})=H(\underline{u},v)$ is the usual
Hamming distance between two words of length $l(v)$.  Observe that
the truncated Hamming function is not a metric!

\begin{ex}
In \cite{E}, the second author defined $T(u,v):=H(\underline{u},\underline{v}) + |l(u)-l(v)|$, and showed that it is a
Hamming compatible metric.  It is easily seen to be Hamming compatible, and one checks the triangle inequality by exhausting the cases.
\end{ex}

Define the following:  \[ d_n(u,v)= H(\underline{u},\underline{v})+ \gamma_n(u,v) \]

where \[ \gamma_n(u,v) = \left \{
\begin{array} {c l}
\frac{|l(u)-l(v)|}{n} & \text{if $l(u)-l(v)\equiv 0\pmod n$} \\
\\
\frac{|l(u)-l(v)|+n-i}{n} & \text{if $l(u)-l(v)\equiv i\pmod n$, for
some $i\in\{0,1,\ldots, n-1\}$}
\end{array}  \right. \]

for words $u,v$ in the language $\Sigma^*$ over an alphabet
$\Sigma$. Let $N=|\Sigma|$.

If $n \geq 3$, then $d_n$ is not a metric, as the following example shows.  Let $\Sigma=\{0,1\}$, $w=0^n$, $u=0^{2n}$,
and $v=0^n 1^n 0^n$.  Here we mean that $w$ is the $n$-letter word consisting solely of zeros, and so on.

Then, \\ $d_n(u,v) = n + \frac{n}{n} = n+1$ \\
$d_n(u,w) = 0 + \frac{n}{n} = 1$ \\
$d_n(v,w) = 0 + \frac{n+n}{n} = 2$, so we have that
$d_n(u,v) > d_n(u,w) + d_n(v,w)$.

However, for $n=2$, this is an integer-valued, Hamming compatible {\emph{metric}}.  In this case,

 \[ d_2(u,v)= H(\underline{u},\underline{v})+ \gamma_2(u,v) \]

where \[ \gamma_2(u,v) = \left \{
\begin{array} {c l}
\frac{|l(u)-l(v)|}{2} & \text{if $l(u)-l(v)$ is even} \\
\\
\frac{|l(u)-l(v)|+1}{2} & \text{if $l(u)-l(v)$ is odd}
\end{array}  \right. \]

Another way to write this is to say $\gamma_2(u,v) = \lceil \frac{|l(u)-l(v)|}{2} \rceil$, i.e. the least integer that is greater than or equal to $\frac{|l(u)-l(v)|}{2}$, also called the round up, or ceiling function.

\begin{prop}
The function $d_2=H(\underline{u},\underline{v})+\lceil \frac{|l(u)-l(v)|}{2} \rceil $ is a metric on $\Sigma^*$.
\end{prop}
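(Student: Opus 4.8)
The plan is to verify the three metric axioms, the triangle inequality being the one that requires real work. Throughout write $a = l(u)$, $b = l(v)$, $c = l(w)$, and let $H_k$ denote ordinary Hamming distance restricted to the first $k$ letters, which is defined whenever all words involved have length at least $k$.

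Non-negativity and the identity of indiscernibles are immediate: both summands of $d_2$ are $\geq 0$, and $d_2(u,v) = 0$ forces $\lceil |a-b|/2 \rceil = 0$, hence $a = b$, whereupon $H(\underline{u},\underline{v}) = H(u,v) = 0$ yields $u = v$ because $H$ is a metric on $\Sigma_a$. Symmetry is equally clear: the truncated Hamming function compares the two words on their first $\min(a,b)$ positions irrespective of which is longer, so $H(\underline{u},\underline{v}) = H(\underline{v},\underline{u})$, and $|a - b|$ is symmetric.

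For the triangle inequality $d_2(u,w) \leq d_2(u,v) + d_2(v,w)$ I would first isolate the Hamming part by proving the structural bound
\[ H(\underline{u},\underline{w}) \le H(\underline{u},\underline{v}) + H(\underline{v},\underline{w}) + \bigl(\min(a,c) - \min(a,b,c)\bigr). \]
Setting $m = \min(a,b,c)$, the ordinary triangle inequality on the first $m$ positions gives $H_m(u,w) \le H_m(u,v) + H_m(v,w)$; since $m \le \min(a,b)$ and $m \le \min(b,c)$, shortening a prefix can only lower the disagreement count, so $H_m(u,v) \le H(\underline{u},\underline{v})$ and $H_m(v,w) \le H(\underline{v},\underline{w})$. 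Meanwhile $H(\underline{u},\underline{w}) = H_{\min(a,c)}(u,w)$ exceeds $H_m(u,w)$ only by disagreements among positions $m+1, \dots, \min(a,c)$, of which there are at most $\min(a,c) - m$. Combining these two observations gives the displayed bound.

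It then remains to prove the purely arithmetic inequality
\[ \min(a,c) - \min(a,b,c) \le \Bigl\lceil \tfrac{|a-b|}{2} \Bigr\rceil + \Bigl\lceil \tfrac{|b-c|}{2} \Bigr\rceil - \Bigl\lceil \tfrac{|a-c|}{2} \Bigr\rceil, \]
since adding $\lceil |a-c|/2 \rceil$ to both sides of the structural bound, together with this inequality, is exactly $d_2(u,w) \le d_2(u,v) + d_2(v,w)$. Using the $a \leftrightarrow c$ symmetry I may assume $a \le c$ and argue by the location of $b$. If $a \le b$ (so $b$ is the middle or the largest value) the left side is $0$, and the right side is non-negative either by subadditivity of the ceiling, $\lceil p/2 \rceil + \lceil q/2 \rceil \ge \lceil (p+q)/2 \rceil$, or by monotonicity of the ceiling. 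If $b \le a$ the left side is $a - b$; here I would cancel $\lceil (a-b)/2 \rceil$ against $a - b$, leaving $\lfloor (a-b)/2 \rfloor$, and finish with the companion identity $\lceil (p+q)/2 \rceil \ge \lceil p/2 \rceil + \lfloor q/2 \rfloor$ applied to $p = c - a$ and $q = a - b$.

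The main obstacle is precisely this triangle inequality, and within it the structural bound: as the excerpt stresses, the truncated Hamming function is not a metric, so one cannot merely chain triangle inequalities. The content lies in quantifying the overhang by $\min(a,c) - \min(a,b,c)$ and showing it is absorbed by the slack the ceiling terms produce. I expect the case $b = \min(a,b,c)$ to be the tight one, where the estimate is an equality up to the floor-versus-ceiling discrepancy.
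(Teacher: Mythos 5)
Your proof is correct, but it is organized quite differently from the paper's. The paper fixes the middle word and splits into three cases according to the ordering of the three lengths; in each case it decomposes the longer words into blocks matched against the shorter ones ($u=u_1u_2$, $v=v_1v_2v_3$, etc.), writes out the triangle defect $D$ explicitly, and reduces its nonnegativity to two facts — that $H$ is a metric on equal-length words, that $H(x,y)\le l(x)$ — plus a parity analysis of the correction terms $\alpha_i\in\{0,1\}$. You instead factor the argument into two independent lemmas: a structural bound
\[
H(\underline{u},\underline{w}) \le H(\underline{u},\underline{v}) + H(\underline{v},\underline{w}) + \bigl(\min(a,c)-\min(a,b,c)\bigr),
\]
which quantifies exactly how badly the (non-metric) truncated Hamming function can violate the triangle inequality, and a purely arithmetic ceiling inequality showing that the $\gamma_2$ terms absorb that defect; I checked both (the four parity cases of $\lceil p/2\rceil + \lfloor q/2\rfloor \le \lceil (p+q)/2\rceil$ all go through), and the combination does yield the triangle inequality. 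The two proofs use the same underlying slack — your overhang term $\min(a,c)-\min(a,b,c)$ plays the role of the paper's $[l(u_2)-H(u_2,v_2)]$ block, and your ceiling arithmetic replaces the paper's $\alpha_i$ parity cases — but your modular version buys several things: the word-level reasoning is done once, without cases, via prefix monotonicity; all case analysis is pushed into integer arithmetic about ceilings; and the proof makes visible where the inequality is tight (when the middle word $v$ is the shortest, matching the paper's own near-counterexample for $d_n$, $n\ge 3$). The paper's version, by contrast, gives fully explicit formulas for the defect $D$ in each configuration, which is more concrete but also more repetitive (and its third case is only sketched by analogy).
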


\begin{proof}
We check the triangle inequality for $\delta = d_2$.  Let $u,v,w \in \Sigma^*$, and set $D = \delta(u,w) + \delta(w,v) -\delta(u,v)$.  We must show $D \geq 0$ in all cases.

\textbf{Case 1:} $l(w) \leq l(u) \leq l(v)$

Then we may write $u=u_1 u_2, v=v_1 v_2 v_3$ where $l(w) = l(u_1) = l(v_1)$ and $l(u_2) = l(v_2)$.  We also have \\
$\delta(u,v) = H(u_1,v_1) + H(u_2,v_2) + \frac{l(v_3) + \alpha_1}{2}$ \\
$\delta(u,w) = H(u_1,w) + \frac{l(u_2) + \alpha_2}{2}$ \\
$\delta(w,v) = H(v_1,w) + \frac{l(v_2) + l(v_3) + \alpha_3}{2}$ , where each $\alpha_i \in \{0,1\}$.  Then we have
\[ D=[H(u_1,w) + H(v_1,w) - H(u_1,v_1)] + [l(u_2) - H(u_2,v_2)] + \frac{\alpha_2 + \alpha_3 - \alpha_1}{2}. \]
Since $H$ is a metric on $\Sigma_n$ for $n=l(w)$, the first term is nonnegative.  The second term is nonnegative, as it is
just property (1) from the previous proposition.  It suffices to show the third term $L= \frac{\alpha_2 + \alpha_3 - \alpha_1}{2}$ is nonnegative.

\emph{Subcase 1A:} $l(u) - l(v)$ is even.  Then $l(u)-l(w)$ and $l(v)-l(w)$ are either both even or both odd.  If both are even, then
$\alpha_1=\alpha_2=\alpha_3=0$, so $L=0 \geq 0$.  If both are odd, then $\alpha_1=0, \alpha_2=\alpha_3=1$ so $L=1 \geq 0$.

\emph{Subcase 1B:} $l(u) - l(v)$ is odd.  Then exactly one of $l(u)-l(w)$ and $l(v)-l(w)$ is odd, and the other is even.  Hence $\alpha_1=1$ and $\{\alpha_2,\alpha_3\} = \{0,1\}$, so $L=0 \geq 0$.

\textbf{Case 2:} $l(u) \leq l(w) \leq l(v)$

Then we may write $w=w_1 w_2, v=v_1 v_2 v_3$ where $l(u) = l(w_1) = l(v_1)$ and $l(w_2) = l(v_2)$.  We also have \\
$\delta(u,v) = H(u,v_1) + \frac{l(v_2) + l(v_3) + \alpha_1}{2}$ \\
$\delta(u,w) = H(u,w_1) + \frac{l(w_2) + \alpha_2}{2}$ \\
$\delta(w,v) = H(w_1,v_1) + H(w_2,v_2) + \frac{l(v_3) + \alpha_3}{2}$ , where each $\alpha_i \in \{0,1\}$.  Then we have
\[ D=[H(u,w_1) + H(w_1,v_1) - H(u,v_1)] + H(w_2,v_2) + \frac{\alpha_2 + \alpha_3 - \alpha_1}{2}. \]
Again, it suffices to show the third term $L= \frac{\alpha_2 + \alpha_3 - \alpha_1}{2}$ is nonnegative.

\emph{Subcase 2A:} $l(u) - l(v)$ is even.  Then $l(u)-l(w)$ and $l(v)-l(w)$ are either both even or both odd.  If both are even, then
$\alpha_1=\alpha_2=\alpha_3=0$, so $L=0 \geq 0$.  If both are odd, then $\alpha_1=0, \alpha_2=\alpha_3=1$ so $L=1 \geq 0$.

\emph{Subcase 2B:} $l(u) - l(v)$ is odd.  Then exactly one of $l(u)-l(w)$ and $l(v)-l(w)$ is odd, and the other is even.  Hence $\alpha_1=1$ and $\{\alpha_2,\alpha_3\} = \{0,1\}$, so $L=0 \geq 0$.

\textbf{Case 3:} $l(u) \leq l(v) \leq l(w)$

Then we may write $v=v_1 v_2, w=w_1 w_2 w_3$ where $l(u) = l(v_1) = l(w_1)$ and $l(v_2) = l(w_2)$.  We also have \\
$\delta(u,v) = H(u,v_1) + \frac{l(v_2) +  \alpha_1}{2}$ \\
$\delta(u,w) = H(u,w_1) + \frac{l(w_2) + l(w_3) + \alpha_2}{2}$ \\
$\delta(w,v) = H(w_1,v_1) + H(w_2,v_2) + \frac{l(w_3) + \alpha_3}{2}$ , where each $\alpha_i \in \{0,1\}$.  Then we have
\[ D=[H(u,w_1) + H(w_1,v_2) - H(u,v_1)] + H(w_2,v_2) + l(w_3) + \frac{\alpha_2 + \alpha_3 - \alpha_1}{2}. \]
Again, it suffices to show the third term $L= \frac{\alpha_2 +
\alpha_3 - \alpha_1}{2}$ is nonnegative.  An argument analogous to
the previous cases shows that $L \geq 0$.

Hence, in all cases $\delta = d_2$ satisfies the triangle inequality (and obviously symmetry and reflexivity) so it is an integer-valued metric, and it is obvious that it is Hamming compatible.
\end{proof}

\begin{rem}
It was asked by the second author in \cite{E} if the metric $T$ in Example 2.3 is minimal in its class.  That is to say, if $\delta$ is a Hamming compatible metric, must it always be the case that $\delta \geq T$?  As we have just seen, $d_2$ is such a metric that satisfies $d_2 \leq T$ by definition, with strict inequality holding for appropriate pairs of words.  However, as we will see in section 4, there exist Hamming compatible metrics that take values even smaller than $d_2$.
\end{rem}

\section{Cardinality of a sphere}
Spheres with respect to the Hamming distance are related to the
concept of error correcting codes. Define the sphere of radius $r$
centered at $u$ to be $S_r(u)= \{ v \in \Sigma^*| d_2(u,v)=r \}$. We
also define the following sphere for words of fixed length,
$S_r^j(u) = \{ v \in \Sigma_j | d_2(u,v)=r \}$, where $\Sigma_j$ is
the set of all $j$-letter words in $\Sigma^*$.

Here, we are aiming to compute the cardinality of the sphere
$S_r(u)$.

\begin{lem}
Suppose $u$ is a word of length $k$.  Then $|S_r(u)| = \displaystyle \sum_{j=k-2r}^{k+2r} |S_r^j(u)|$.
\end{lem}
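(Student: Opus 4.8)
The plan is to partition the sphere $S_r(u)$ according to the length of its elements and then argue that only finitely many lengths can contribute. First I would observe that every word $v \in \Sigma^*$ has a well-defined length $l(v) = j$, and that $v$ lies in $S_r(u)$ precisely when $d_2(u,v) = r$, which is exactly the condition $v \in S_r^j(u)$. Since the sets $\Sigma_j$ are pairwise disjoint and their union is all of $\Sigma^*$, this yields a disjoint-union decomposition $S_r(u) = \bigsqcup_{j \geq 0} S_r^j(u)$, and hence $|S_r(u)| = \sum_{j \geq 0} |S_r^j(u)|$, where a priori the sum is infinite.

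The key step is to show that $S_r^j(u) = \emptyset$ unless $k - 2r \leq j \leq k + 2r$, which collapses the sum to the stated finite range. For this I would use that the truncated Hamming term $H(\underline{u},\underline{v})$ is nonnegative, so that for any $v$ with $l(v) = j$ one has the lower bound
\[ d_2(u,v) = H(\underline{u},\underline{v}) + \left\lceil \frac{|k-j|}{2} \right\rceil \geq \left\lceil \frac{|k-j|}{2} \right\rceil. \]
If $v \in S_r^j(u)$, then $d_2(u,v) = r$, so $\left\lceil |k-j|/2 \right\rceil \leq r$. Because $r$ is an integer, this is equivalent to $|k-j|/2 \leq r$, that is, $|k-j| \leq 2r$, which is precisely $k - 2r \leq j \leq k + 2r$.

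Combining the two steps, every nonempty $S_r^j(u)$ has $j$ lying in the interval $[k-2r,\, k+2r]$, so the formally infinite sum reduces to $|S_r(u)| = \sum_{j=k-2r}^{k+2r} |S_r^j(u)|$, as claimed; any indices $j < 0$ appearing at the lower end simply index empty sets and contribute zero. I expect no genuine obstacle here, as the argument is essentially a bookkeeping decomposition together with a nonnegativity bound. The only point requiring a moment's care is the equivalence $\left\lceil |k-j|/2 \right\rceil \leq r \iff |k-j| \leq 2r$, which relies on the integrality of $r$ and the definition of the ceiling function.
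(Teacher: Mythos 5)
Your proof is correct and follows essentially the same route as the paper's: decompose $S_r(u)$ as a disjoint union of the $S_r^j(u)$ by word length, then note that $d_2(u,v)=r$ forces $\gamma_2(u,v)=\lceil |k-j|/2\rceil \leq r$, i.e.\ $|k-j|\leq 2r$, so only the stated range of $j$ contributes. You simply spell out the nonnegativity and integrality details that the paper leaves implicit.
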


\begin{proof}
If $v$ is a word of length $j$ with $d_2(u,v)=r$, then we must have $k-2r \leq j \leq k+2r$, else $\gamma_2(u,v)>r$.  Since
$S_r(u)$ is a disjoint union of these $S_r^j(u)$, the equality follows immediately.
\end{proof}

Now we compute the cardinality of $S_r^j(u)$ for a word $u$ of length $k$.  Let $v$ be a word of length $j$ with
$d_2(u,v)=r$.  Let $a=\gamma_2(u,v)$, i.e. $a$ is the smallest integer such that $|k-j| \leq 2a$, i.e. $a= \lceil \frac{|k-j|}{2} \rceil$ is the roundup of half $|k-j|$.  Then $H(\underline{u},\underline{v})=r-a$, so the concatenations must differ in exactly $r-a$ places.  If $j \leq k$, then there are ${j \choose r-a} (N-1)^{r-a}$ such $j$-letter words that differ from $u$ in exactly $r-a$ places.  If $j>k$, then there are ${k \choose r-a} (N-1)^{r-a} N^{j-k}$ such $j$-letter words. Hence, we have proved

\begin{lem}
Suppose $u$ is a word of length $k$.  Then \[ |S_r^j(u)| = \left \{
\begin{array} {c l}
{j \choose r-a} (N-1)^{r-a} & \text{if $j \leq k$} \\
\\
{k \choose r-a} (N-1)^{r-a} N^{j-k} & \text{if $j>k$}
\end{array} \right. \]
where $a= \lceil \frac{|k-j|}{2} \rceil$.
\end{lem}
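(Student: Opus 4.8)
The plan is to count, for fixed lengths $k = l(u)$ and $j = l(v)$, exactly how many words $v \in \Sigma_j$ satisfy $d_2(u,v) = r$, using the defining decomposition $d_2(u,v) = H(\underline{u},\underline{v}) + \gamma_2(u,v)$. The key observation is that $\gamma_2(u,v) = \lceil |k-j|/2 \rceil$ depends only on the two lengths, not on the letters, so once $j$ is fixed the quantity $a := \gamma_2(u,v) = \lceil |k-j|/2 \rceil$ is a constant. Thus the condition $d_2(u,v) = r$ is equivalent to the purely Hamming condition $H(\underline{u},\underline{v}) = r - a$ on the truncated words. This reduces the entire problem to a standard enumeration of words at a prescribed Hamming distance, which is the crux of the counting.

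First I would split into the two cases governing which word gets truncated. If $j \leq k$, then $\underline{u}$ is the length-$j$ prefix of $u$ and $\underline{v} = v$, so we must count length-$j$ words $v$ whose $j$ letters differ from the corresponding $j$ letters of $u$ in exactly $r-a$ positions. This is the classical Hamming-sphere count: choose which $r-a$ of the $j$ positions disagree, in $\binom{j}{r-a}$ ways, and in each such position pick any of the remaining $N-1$ letters, giving $(N-1)^{r-a}$ choices. Hence $|S_r^j(u)| = \binom{j}{r-a}(N-1)^{r-a}$.

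Next I would handle $j > k$. Now $\underline{v}$ is the length-$k$ prefix of $v$ and $\underline{u} = u$, so the truncated-Hamming condition $H(\underline{u},\underline{v}) = r-a$ constrains only the first $k$ letters of $v$: these must differ from $u$ in exactly $r-a$ places, contributing $\binom{k}{r-a}(N-1)^{r-a}$ choices exactly as before. The remaining $j-k$ letters of $v$ are entirely unconstrained by $d_2$, since they lie beyond the truncation point, so each may be any of the $N$ alphabet symbols independently, yielding a factor of $N^{j-k}$. Multiplying gives $|S_r^j(u)| = \binom{k}{r-a}(N-1)^{r-a} N^{j-k}$, matching the claimed formula.

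The main obstacle, such as it is, lies in justifying that the truncated positions genuinely decouple from the free positions and that no double counting or undercounting occurs: one must verify that distinct choices of the constrained prefix together with distinct choices of the free suffix always produce distinct words $v$ with the same value of $d_2(u,v)$, and that every $v \in S_r^j(u)$ arises exactly once this way. This is immediate once one notes that a word of length $j$ is uniquely determined by its first $k$ letters and its last $j-k$ letters, and that $\gamma_2$ is a function of lengths alone, so the factorization of the count is a genuine bijection. The remaining work is the routine binomial bookkeeping already displayed, so no deeper difficulty arises.
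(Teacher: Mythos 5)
Your proof is correct and takes essentially the same approach as the paper: since $\gamma_2$ depends only on the lengths, the condition $d_2(u,v)=r$ for $v\in\Sigma_j$ reduces to $H(\underline{u},\underline{v})=r-a$, and one then counts words at that truncated Hamming distance, with the unconstrained suffix contributing the factor $N^{j-k}$ when $j>k$. The only difference is that you make explicit the decoupling/no-double-counting justification, which the paper leaves implicit.
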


\section{minimality of the $d_2$ metric}

Here we explore lower bounds of a Hamming compatible metric $\delta$
on a language $\Sigma^*$.  Recall that $\varepsilon$ is the empty
word.

\begin{df}We say that two words $u,v \in \Sigma^*$ of the same length
$n=l(u)=l(v)$ are \emph{Hamming opposites} if the Hamming distance
between them is maximal, i.e. $H(u,v)=n$.\end{df}

\begin{rem}Let $\Sigma$ be an alphabet. If $|\Sigma|=k$ and $u\in\Sigma^*$ with length $n\geq 1$, then $u$ has exactly $(k-1)^n$ Hamming opposite(s).\end{rem}

\begin{df}
We say that $\delta$ is \emph{weakly uniform}  if for any Hamming
opposites $u,v$ we have that $\delta(u,\varepsilon)=
\delta(v,\varepsilon)$.
\end{df}

\begin{prop}
If $\delta$ is weakly uniform, then $\delta(u,\varepsilon) \geq
\frac{l(u)}{2}$ for all $u \in \Sigma^*$.  In particular,
$\delta(u,\varepsilon) \rightarrow \infty$ as $l(u) \rightarrow
\infty$.
\end{prop}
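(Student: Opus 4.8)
The plan is to exploit the interaction between Hamming compatibility, the triangle inequality, and weak uniformity by routing a distance computation through a cleverly chosen companion word. Fix $u \in \Sigma^*$ and set $n = l(u)$. The case $n = 0$ is immediate, since then $u = \varepsilon$ and $\delta(\varepsilon,\varepsilon) = 0 = l(u)/2$, so I may assume $n \geq 1$ throughout.

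The central idea is to take $v$ to be a \emph{Hamming opposite} of $u$, that is, a word of the same length $n$ with $H(u,v) = n$. As observed in the remark above, such a $v$ exists (indeed there are $(|\Sigma|-1)^n$ of them), provided $|\Sigma| \geq 2$; the one-letter alphabet is degenerate and can be excluded. Because $u$ and $v$ have equal length and $\delta$ is Hamming compatible, I obtain $\delta(u,v) = H(u,v) = n$ for free, with no further work.

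Next I would apply the triangle inequality with the empty word $\varepsilon$ as the intermediate point, giving $\delta(u,v) \leq \delta(u,\varepsilon) + \delta(\varepsilon,v)$. Here weak uniformity does the essential work: since $u$ and $v$ are Hamming opposites, the hypothesis forces $\delta(u,\varepsilon) = \delta(v,\varepsilon)$, so the right-hand side collapses to $2\,\delta(u,\varepsilon)$. Combining this with the previous step yields $n \leq 2\,\delta(u,\varepsilon)$, that is, $\delta(u,\varepsilon) \geq l(u)/2$, which is exactly the claim. The final assertion that $\delta(u,\varepsilon) \to \infty$ as $l(u) \to \infty$ then follows at once, since the lower bound $l(u)/2$ already tends to infinity.

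There is no real computational obstacle; the argument is quite short. The only points requiring care are the existence of a Hamming opposite, which is precisely where the hypothesis $|\Sigma| \geq 2$ (in force for any nontrivial alphabet) enters, and the recognition that $\varepsilon$ is the correct intermediate vertex to route the triangle inequality through, so that weak uniformity can be applied symmetrically to both legs and the two unknown quantities $\delta(u,\varepsilon)$ and $\delta(v,\varepsilon)$ can be identified.
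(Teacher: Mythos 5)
Your proof is correct and is essentially identical to the paper's: both choose a Hamming opposite $v$ of $u$, use Hamming compatibility to get $\delta(u,v)=n$, and apply the triangle inequality through $\varepsilon$ together with weak uniformity to conclude $n \leq 2\,\delta(u,\varepsilon)$. Your added remarks on the trivial case $u=\varepsilon$ and the need for $|\Sigma|\geq 2$ (so that Hamming opposites exist) are reasonable refinements the paper leaves implicit.
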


\begin{proof}
Let $u$ be a word of length $l(u)=n$, and choose a Hamming opposite $v$.  Then the triangle inequality gives
\[ n = H(u,v) = \delta(u,v) \leq \delta(u,\varepsilon) + \delta(v,\varepsilon) = 2 \delta(u,\varepsilon), \]
hence $\delta(u,\varepsilon) \geq \frac{n}{2}$.
\end{proof}

\begin{rem}\label{Rem4.4}
The above argument shows that even when $\delta$ is not weakly
uniform, given any word $u$ of length $n$ and any Hamming opposite
$v$, then at least one of $u$ or $v$ must satisfy the above
inequality, since either $\delta(u,\varepsilon) \geq
\delta(v,\varepsilon)$ or $\delta(u,\varepsilon) \leq
\delta(v,\varepsilon)$. Moreover, it is important to note that this
inequality is sharp, since our weakly uniform metric $d_2$ above
satisfies $d_2(u,\varepsilon)=\lceil \frac{l(u)}{2} \rceil$.
\end{rem}

Denote by $H=H(\underline{u},\underline{v})$ the Hamming distance of the concatenation of $u,v \in \Sigma^*$.  Given a
Hamming compatible metric $\delta$, set $\gamma = \delta - H$.  Thus, we may write $\delta = H + \gamma$
where $H$ is the concatenated Hamming distance.  Observe that $\delta$ is Hamming compatible if and only if $\gamma(u,v)=0$ whenever $l(u)=l(v)$.

Since $H(\underline{u},\underline{\varepsilon})=0$ implies
$\delta(u,\varepsilon)=\gamma(u,\varepsilon)$, requiring that
$\delta$ be weakly uniform is equivalent to requiring that
$\gamma(u,\varepsilon)=\gamma(v,\varepsilon)$ for all Hamming
opposites $u,v$.

\begin{df}
Given $\delta = H + \gamma$ as above, we say that $\delta$ is \emph{uniform} if given any pair of Hamming opposites $u,v \in \Sigma_n$, we have $\gamma(u,w)=\gamma(v,w)$ for all $w \in \Sigma^*$.
\end{df}

\begin{ex}
$T(u,v):=H(\underline{u},\underline{v}) + |l(u)-l(v)|$, so here $\gamma = |l(u)-l(v)|$.
\end{ex}

\begin{ex}
$d_2(u,v)=H(\underline{u},\underline{v}) + \lceil \frac{|l(u)-l(v)|}{2} \rceil$, so here $\gamma=\lceil \frac{|l(u)-l(v)|}{2} \rceil$.
\end{ex}

Whereas weak uniformity is the simple notion that Hamming opposites
are equidistant from the empty word $\varepsilon$, this definition
of uniformity is slightly more mysterious.  If we fix
$w=\varepsilon$ this condition reduces to weak uniformity.   Notice
both examples above are uniform, and more generally, if
$\gamma(u,w)$ depends only on the lengths $l(u), l(w)$ of the inputs
$u,w$, then $\delta$ is uniform.

\begin{lem}\label{4.8}
Suppose we are given words $u,w \in \Sigma^*$ with $l(u) \geq l(w)$.  Then there exists a Hamming opposite $v$ for $u$ such that \[ H(\underline{u},\underline{w}) + H(\underline{v},\underline{w}) = l(w).\]
\end{lem}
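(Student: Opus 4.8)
The plan is to construct the Hamming opposite $v$ one letter at a time, arranging that each of the first $l(w)$ positions contributes exactly $1$ to the sum $H(\underline{u},\underline{w}) + H(\underline{v},\underline{w})$. First I would set $n = l(w)$ and write $u = u_1 u_2 \cdots u_k$ and $w = w_1 w_2 \cdots w_n$, where $k = l(u) \geq n$. Since $l(u) \geq l(w)$, truncation compares the first $n$ letters of $u$ with $w$, so $H(\underline{u},\underline{w})$ counts the positions $i \in \{1,\ldots,n\}$ with $u_i \neq w_i$; and since any Hamming opposite $v$ of $u$ has the same length $k \geq n$, likewise $H(\underline{v},\underline{w})$ counts the positions $i \in \{1,\ldots,n\}$ with $v_i \neq w_i$. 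Thus both truncated terms are sums of per-position contributions over $i = 1, \ldots, n$, and it suffices to control these positions.

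The key observation is that at each position $i \leq n$ I can force exactly one of the two conditions $u_i \neq w_i$ and $v_i \neq w_i$ to hold, regardless of whether $u_i$ agrees with $w_i$. Concretely, if $u_i \neq w_i$ I set $v_i = w_i$; this is a legitimate choice for a Hamming opposite because $v_i = w_i \neq u_i$, and now $u_i \neq w_i$ holds while $v_i \neq w_i$ fails. If instead $u_i = w_i$, I choose any $v_i \neq u_i$, which is possible since $N = |\Sigma| \geq 2$; then automatically $v_i \neq w_i$ while $u_i \neq w_i$ fails. In either case position $i$ contributes exactly $1$ to the combined count $H(\underline{u},\underline{w}) + H(\underline{v},\underline{w})$. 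For the remaining positions $i > n$ (if any) I simply pick any $v_i \neq u_i$, which does not affect either truncated sum.

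With $v$ defined this way, $v_i \neq u_i$ holds for every $i \in \{1,\ldots,k\}$, so $v$ is genuinely a Hamming opposite of $u$, i.e. $H(u,v) = k = l(u)$. Summing the per-position contributions over $i = 1, \ldots, n$ then gives $H(\underline{u},\underline{w}) + H(\underline{v},\underline{w}) = n = l(w)$, as required.

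I do not anticipate a serious obstacle here; the only point requiring care is the implicit hypothesis $N \geq 2$, which is exactly what guarantees that the case $u_i = w_i$ admits a valid choice of $v_i$, and which also ensures (via the count $(N-1)^n$ of Hamming opposites noted earlier) that Hamming opposites exist at all. The mild subtlety worth stating explicitly is the complementary bookkeeping: we are not trying to make $v$ far from $w$, but rather to split each position so that the \emph{pair} $(u,v)$ disagrees with $w$ at $i$ in precisely one coordinate, which is what pins the total to $l(w)$ rather than merely bounding it.
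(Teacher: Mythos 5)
Your construction of $v$ is exactly the one in the paper's proof: set $v_i = w_i$ when $u_i \neq w_i$, otherwise pick any $v_i \neq u_i$, and fill the tail positions with arbitrary letters differing from $u_i$. The only difference is bookkeeping style (you tally a per-position contribution of exactly $1$, while the paper counts $H(u',w)=k$ and $H(v',w)=m-k$ globally), so the proposal is correct and essentially identical to the paper's argument.
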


\begin{proof}
Write $u=u_1u_2 \ldots u_n$ and $w=w_1w_2 \ldots w_m$, where $m \leq n$ by assumption.  Then let $k=H(\underline{u},\underline{w})=H(u',w)$, where $u'=u_1u_2 \ldots u_m$.  Now choose $v_i$ by the following method.  For $1 \leq i \leq m$, set $v_i = w_i$ if $w_i \neq u_i$.  If $w_i = u_i$, then take $v_i$ to be any letter in $\Sigma$ different from $u_i$.  For $m < i \leq n$, just take $v_i$ to be any letter in $\Sigma$ different from $u_i$.

Now, it is clear by the construction that $v=v_1v_2 \ldots v_n$ is a Hamming opposite for $u$ since they differ in each letter place.  Let $v'=v_1v_2 \ldots v_m$, so that $H(\underline{v},\underline{w})=H(v',w)$.  Since $u'$ and $w$ differ in $k$ places, and $v'$ agrees with $w$ in precisely those $k$ places and differs elsewhere, $H(v',w)=m-k$.  Hence, we have
\[ H(u',w) + H(v',w) = k + (m-k) = m =l(w). \]
\end{proof}

The following theorem is the main result.

\begin{thm}\label{Th4.9}
Suppose $\delta$ is a uniform Hamming compatible metric on a language $\Sigma^*$.  Then $\delta(u,v) \geq d_2(u,v)$ for all $u,v \in \Sigma^*$.  In particular, $\delta(u,v) \rightarrow \infty$ as $|l(u)-l(v)| \rightarrow \infty$.
\end{thm}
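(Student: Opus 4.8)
The plan is to reduce the statement to a single inequality on the ``defect'' term $\gamma = \delta - H$ and then to exploit Lemma~\ref{4.8} together with uniformity. Writing $\delta = H(\underline{u},\underline{v}) + \gamma(u,v)$ and recalling that $d_2(u,v) = H(\underline{u},\underline{v}) + \lceil \tfrac{|l(u)-l(v)|}{2}\rceil$, I observe that both metrics carry the \emph{same} truncated Hamming term. Hence proving $\delta \geq d_2$ is equivalent to proving
\[ \gamma(u,v) \geq \left\lceil \tfrac{|l(u)-l(v)|}{2} \right\rceil \quad \text{for all } u,v \in \Sigma^*. \]
Since $\delta$ and $H$ are integer-valued and symmetric, so is $\gamma$; and by symmetry of both sides I may assume $l(u) \geq l(v)$, writing $n = l(u)$ and $m = l(v)$.

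Next I would manufacture a good Hamming opposite of $u$. Applying Lemma~\ref{4.8} with the shorter word $v$ in the role of $w$, I obtain a Hamming opposite $\bar u \in \Sigma_n$ of $u$ satisfying $H(\underline{u},\underline{v}) + H(\underline{\bar u},\underline{v}) = m$. The point of the lemma is that it produces an opposite for which the two truncated Hamming contributions add up to \emph{exactly} $m$, which is what makes the final bound tight. I then apply the triangle inequality to the triple $u,v,\bar u$:
\[ \delta(u,\bar u) \leq \delta(u,v) + \delta(v,\bar u). \]
Because $u$ and $\bar u$ are Hamming opposites of length $n$ and $\delta$ is Hamming compatible, the left-hand side equals $H(u,\bar u) = n$.

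Finally I would expand the right-hand side using $\delta = H + \gamma$ and invoke uniformity. Uniformity applied to the opposites $u,\bar u \in \Sigma_n$ gives $\gamma(\bar u, v) = \gamma(u,v)$ for the arbitrary word $v$, and symmetry of $\gamma$ then turns $\gamma(v,\bar u)$ into $\gamma(u,v)$. Substituting, the inequality becomes
\[ n \leq \big[\, H(\underline{u},\underline{v}) + H(\underline{v},\underline{\bar u}) \,\big] + 2\gamma(u,v) = m + 2\gamma(u,v), \]
so $\gamma(u,v) \geq \tfrac{n-m}{2}$. Because $\gamma(u,v)$ is an integer, this forces $\gamma(u,v) \geq \lceil \tfrac{n-m}{2}\rceil$, which is exactly the desired bound; the ``in particular'' clause is then immediate, since $d_2(u,v) \geq \lceil \tfrac{|l(u)-l(v)|}{2}\rceil \to \infty$ as $|l(u)-l(v)| \to \infty$.

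I expect the conceptual crux to be the joint use of the two hypotheses. Weak uniformity alone would only control distances to $\varepsilon$, but here the triangle inequality is routed through an arbitrary word $v$, so the full strength of uniformity is needed to identify $\gamma(v,\bar u)$ with $\gamma(v,u)$. Pairing this with Lemma~\ref{4.8}, which pins the sum of the truncated Hamming terms to $m$ rather than merely bounding it from above, is precisely what yields the sharp constant $\tfrac12$ and hence the comparison with $d_2$ rather than some weaker lower bound.
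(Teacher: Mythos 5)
Your proposal is correct and follows essentially the same route as the paper's own proof: the decomposition $\delta = H + \gamma$, the choice of a Hamming opposite via Lemma~\ref{4.8}, the triangle inequality through the third word, the uniformity step identifying the two $\gamma$-terms, and the final integrality argument all coincide (up to renaming the paper's $w$ as your $v$ and its opposite $v$ as your $\bar u$). No gaps; this matches the paper's argument.
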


\begin{proof}
We must show that for all $u,w \in \Sigma^*$, $\delta(u,w) \geq d_2(u,w)$.  Writing $\delta=H+\gamma$, this is equivalent to showing that $\gamma(u,w) \geq \lceil \frac{|l(u)-l(w)|}{2} \rceil$.  Assume without loss of generality that $l(u) \geq l(w)$, and choose a Hamming opposite $v$ for $u$ that satisfies the equality of the above lemma.  Now, we have
\[ l(u)=H(u,v)=\delta(u,v) \leq \delta(u,w) + \delta(v,w) = H(\underline{u},\underline{w}) + \gamma(u,w) + H(\underline{v},\underline{w}) + \gamma(v,w). \]
By our choice of $v$, $H(\underline{u},\underline{w}) + H(\underline{v},\underline{w})=l(w)$, so
\[ l(u) \leq l(w) + \gamma(u,w) + \gamma(v,w) = l(w) + 2 \gamma(u,w) \]
by the uniformity assumption.  Hence, $\gamma(u,w) \geq \frac{l(u)-l(w)}{2}$.  Because $\gamma$ is integer valued, it follows that $\gamma(u,w) \geq \lceil \frac{|l(u)-l(w)|}{2} \rceil$, which completes the proof.
\end{proof}

\begin{rem}\label{Rem4.10}We remark that given any $u,w \in \Sigma^*$ and a Hamming opposite $v$ for $u$ that satisfies the equality of the
above lemma, either $\delta(u,w) \geq d_2(u,w)$ or $\delta(v,w) \geq d_2(v,w)$ must happen, since either
$\gamma(u,w) \geq \gamma(v,w)$ or $\gamma(u,w) \leq \gamma(v,w)$, \emph{without any uniformity assumption on
$\delta$}.  This means that even the wildest Hamming compatible distances must grow to some extent with $|l(u)-l(v)|$.  To make this slightly more rigorous, given any word $w \in \Sigma^*$, we can always find a sequence of words $u_n \in \Sigma_n$ such that $\delta(u_n,w) \geq d_2(u_n,w)$, and in particular $\delta(u_n,w) \rightarrow \infty$ as $n \rightarrow \infty$.  The uniformity assumption simply ensures that this growth is literally uniform.
\end{rem}

\begin{ex}
Uniformity is necessary for the above minimality result to hold.  Let $\Sigma=\{0,1\}$ and define
\[ \delta(u,\varepsilon) = \left \{
\begin{array} {c l}
{0} & \text{if $u=\varepsilon$}\\
\\
{1} & \text{if $u=000$} \\
\\
{d_2(u,000)} & \text{otherwise}
\end{array} \right. \]
where $d_2$ is the metric defined previously.  Then take
$\delta(u,v)=d_2(u,v)$ for all $u,v \in \Sigma^*-\{\varepsilon\}$.
We first show $\delta$ is a metric.  Since we have only changed
distances to $\varepsilon$, and $d_2$ has been shown to be a metric,
we only need to check the triangle inequality for expressions
involving $\varepsilon$.  First consider
\[ \delta(u,\varepsilon) \leq \delta(u,v) + \delta(v,\varepsilon). \]
If $u=\varepsilon, 000$ then this is trivial. If $u\neq \varepsilon,
000$ then this becomes
\[ d_2(u,000) \leq \delta(u,v) + \delta(v,\varepsilon). \]
If $v=\varepsilon, 000$ then again this is trivial, so assuming $v
\neq \varepsilon, 000$ this becomes
\[ d_2(u,000) \leq d_2(u,v) + d_2(v,000) \]
which is true because $d_2$ is a metric.  Now consider the expression
\[ \delta(u,v) \leq \delta(u,\varepsilon) + \delta(v,\varepsilon). \]
Again, if $u=\varepsilon, 000$ then this is trivial, and by symmetry
the same is true if $v=\varepsilon, 000$. If neither $u$ nor $v$ are
these words, then this becomes
\[ d_2(u,v) \leq d_2(u,000) + d_2(v,000) \]
which is again true because $d_2$ is a metric.  Hence, $\delta$ is a Hamming compatible integer valued metric.

It is easy to see that $\delta$ is not even weakly uniform, since
$\delta(0,\varepsilon)=d_2(0,000)=1$ and
$\delta(1,\varepsilon)=d_2(1,000)=2$.  Morever, by construction
$\delta(000,\varepsilon)=1<2=\lceil{\frac{3}{2}}
\rceil=d_2(000,\varepsilon)$, so it takes values smaller than $d_2$.

\end{ex}

\begin{ex}
The following variant of the above example shows that there exist weakly uniform metrics which are not uniform.
As before, let $\Sigma = \{0,1 \}$ and define

\[ \delta(u,0) = \left \{
\begin{array} {c l}
{0} & \text{if $u=0$}\\
\\
{1} & \text{if $u=\varepsilon, 1, 11$} \\
\\
{T(u,11)} & \text{otherwise}
\end{array} \right. \]
where $T$ is the metric defined previously.  Then take
$\delta(u,v)=T(u,v)$ for all $u,v \in \Sigma^*-\{0 \}$. We first
show $\delta$ is a metric.  Again, since we have only changed
distances to $0$, we have not changed the distance from
$\varepsilon$ to $0$ or from $1$ to $0$, and $T$ is a metric, we
only need to check the triangle inequality for expressions involving
$0$. First consider
\[ \delta(u,0) \leq \delta(u,v) + \delta(v,0). \]
If $u=\varepsilon, 0, 1, 11$ then this is trivial. If $u\neq
\varepsilon, 0, 1, 11$ and then this becomes
\[ T(u,11) \leq \delta(u,v) + \delta(v,0). \]
Now we may assume $l(u) \geq 2$.  If $v=\varepsilon$ this is
$T(u,11)\leq l(u) +1$, which is true since $T(u,11) =
H(\underline{u},11)+ l(u) -2 \leq l(u)$.  If $v=0$ this is $T(u,11)
\leq \delta(u,0)+0=T(u,11)$.  If $v=1$ this is $T(u,11) \leq
\delta(u,1)+1$.  This holds, because the left hand side is
\[H(\underline{u},11)+l(u)-2 \leq H(\underline{u},1) + l(u)-1+1\]
the right hand side.  If $v=11$ this is $T(u,11)\leq
\delta(u,11)+1=T(u,11)+1$.  Now, assuming $v \neq \varepsilon, 0, 1,
11$ this becomes
\[ T(u,11) \leq T(u,v) + T(v,11) \]
which is true because $T$ is a metric.

 Now consider the expression
\[ \delta(u,v) \leq \delta(u,0) + \delta(v,0). \]
If $u=0$ this is trivial.  If $u=1$, this becomes
\[ \delta(1,v) \leq 1+\delta(v,0). \]
If $v=\varepsilon, 0, 1$ this is trivial, and if $v=11$ this is
$1=T(1,11)\leq 1+1$.  If $v \neq \varepsilon, 0, 1, 11$ then $l(v)
\geq 2$ and this becomes $T(1,v) \leq 1+T(v,11)$.  This is
\[ H(\underline{v},1) + l(v)-1 \leq 1+H(\underline{v},11)+l(v)-2\]
which holds because $H(\underline{v},1) \leq H(\underline{v},11)$
when $v \neq 11$.  Now suppose $u=\varepsilon$.  Then our inequality
becomes $\delta(\varepsilon,v) \leq 1+ \delta(v,0)$.  If
$v=\varepsilon, 0, 1, 11$ this is obvious, so assume $v \neq
\varepsilon,0,1,11$, so $l(v) \geq 2$, and this is $ l(v) \leq
1+T(v,11)=1 + H(\underline{v},11)+l(v)-2.$ But this is just saying
that $H(\underline{v},11) \geq 1$, which is true since $v \neq 11$
and $l(v) \geq 2$.

If $u=11$ then our inequality becomes $\delta(11,v) \leq 1+
\delta(v,0)$.  If $v=\varepsilon,0,1,11$ this is clear, and if $v
\neq \varepsilon,0,1,11$ this is just saying that $T(11,v) \leq
1+T(v,11)$ which is trivial.  The final case is when $u,v \neq
\varepsilon,0,1,11$, in which case our inequality becomes
\[ T(u,v) \leq T(u,11) + T(v,11) \]
which is again true because $T$ is a metric.  Hence, $\delta$ is a Hamming compatible integer valued metric.

Now, $\delta$ is weakly uniform because
$\delta(0,\varepsilon)=\delta(1,\varepsilon)=1$, and for $l(u)>1$,
we have
$\delta(u,\varepsilon)=T(u,\varepsilon)=T(v,\varepsilon)=\delta(v,\varepsilon)$
for all $u,v \in \Sigma_n$.  However, $\delta$ is not uniform
because $\gamma(11,0)=\delta(11,0) -
H(\underline{11},\underline{0})=1-1=0$, while
$\gamma(00,0)=\delta(00,0) - H(\underline{00},\underline{0})=2-0=2$.
\end{ex}

\section{Non-uniform metrics}

These examples illustrate that arbitrary Hamming compatible metrics may behave wildly in general, so some uniformity condition is necessary in
order to say something about minimality.  Furthermore, unless there are distinguished words, one desires such uniformity for this type of metric.

However, given an arbitrary Hamming compatible metric $\delta$, one may na\"ively
 ask the question, how many words of a given length must satisfy the inequality of Theorem~\ref{Th4.9}?
 Suppose we have words $u \in \Sigma_n$, $w \in \Sigma_m$ with $n \geq m$ that violate the inequality of the theorem, i.e. $\delta(u,w) < d_2(u,w)$.
 In view of Remark~\ref{Rem4.10}, any Hamming opposite $v$ of $u$ chosen by the method of Lemma~\ref{4.8} must then satisfy $\delta(v,w) \geq d_2(v,w)$.

Let $h:=H(\underline{u},\underline{w})$ and $N=|\Sigma|$ as before.
There are $(N-1)^{n}$ total Hamming opposites of $u$, and of these
there are $(N-1)^{m-h}(N-1)^{n-m}=(N-1)^{n-h}$ Hamming opposites
that satisfy the equality of Lemma~\ref{4.8}, and hence the
inequality of Theorem~\ref{Th4.9}.  For instance, if $w=
\varepsilon$, then $h=0$ and so every Hamming opposite $v$ of $u$
must satisfy $\delta(v,\varepsilon) \geq d_2(v,\varepsilon)=\lceil
\frac{n}{2} \rceil$, which recovers the first part of
Remark~\ref{Rem4.4}.

In general, the probability of a Hamming opposite satisfying the inequality of the theorem is at least $\frac{(N-1)^{n-h}}{(N-1)^n}= \frac{1}{(N-1)^h} \geq \frac{1}{(N-1)^m}$.  The exceptional case is when $\Sigma=\{0,1\}$, i.e. $N=2$, because in this setting Hamming opposites are unique.  Here, if $\delta(u,w) < d_2(u,w)$, then we must have $\delta(v,w) \geq d_2(v,w)$ for the unique Hamming opposite $v$ of $u$,
so at least half of the words of a given length must satisfy the inequality of the theorem.


\begin{thebibliography}{99}

\bibitem{E}  Echi, O. {\emph{The cardinality of a sphere relative to an edit distance.}} Appl. Math. Sci. (Ruse) 3 (2009), no. 17-20, p. 813–824.

\bibitem{H}  Hamming, R. W. {\emph{Error detecting and error correcting codes.}} Bell System Tech. J. \textbf{29}, (1950) p. 147-160.


\end{thebibliography}
\end{document}